\newcommand{\citep}[1]{\cite{#1}}
\newcommand{\citet}[1]{\cite{#1}}
\newtheorem{definition}{Definition}[section]
\newtheorem{lemma}[definition]{Lemma}
\newtheorem{theorem}[definition]{Theorem}
\newtheorem{corollary}[definition]{Corollary}
\newtheorem{remark}[definition]{Remark}
\newtheorem{example}[definition]{Example}
\newtheorem{examples}[definition]{Examples}
\newcommand{\ii}{\mathrm{i}}
\newcommand{\e}{\mathrm{e}}
\newcommand{\dd}{\mathrm{d}}
\newcommand{\D}{\mathrm{D}}
\newcommand{\const}{\mathrm{const}}
\newcommand{\re}{\mathrm{Re}\;}
\newcommand{\im}{\mathrm{Im}\;}
\newcommand{\sgn}{\mathrm{sgn}}
\newcommand{\oo}{\mathrm{o}}
\newcommand{\upi}{\uppi}
\newcommand{\LICM}{\mathfrak{L}}
\newcommand{\BF}{\mathfrak{B}}
\newcommand{\CBF}{\mathfrak{F}}
\renewcommand{\Sigma}{\mathfrak{S}}
\newcommand{\CrF}{\mathfrak{Q}}
\title{On wave propagation in viscoelastic media with concave creep compliance}
\author{Andrzej Hanyga\\
ul. Bitwy Warszawskiej 1920 r. 14/52\\
02-366 Warszawa, Poland}
\begin{document}

\maketitle

\begin{abstract}
It is proved that the attenuation function of a viscoelastic medium with a non-decreasing and concave 
creep compliance is sublinear in the high frequency range. 
\end{abstract}

\textbf{Keywords.}\\
{\small viscoelasticity, wave attenuation, creep compliance, completely monotonic, Bernstein function, creep function, 
bio-tissue, polymer} 

\section*{Notation.}

\begin{tabular}{l l l}
$\mathbb{R}$ & the set of real numbers & \\
$\mathbb{C}$ & the set of complex numbers & \\
$f \circ g$ & composition & $ (f\circ g)(x) := f(g(x))$ \\
$\mathcal{L}[f](p) = \tilde{f}(p)$ & Laplace transform of $f(t)$ & $\tilde{f}(p) = \int_0^\infty \e^{-p t}\, f(t)\, \dd t$\\
$\hat{f}(k)$ & Fourier transform of $f(x)$ & $\hat{f}(k) = \int_{-\infty}^\infty \e^{-\ii k \cdot x} f(x)\, \dd x$
\end{tabular}

\section{Introduction.}

Ultrasonic tests of mechanical properties of polymers and bio-tissues assume that the material is viscoelastic. 
In applications a linear viscoelastic material is always assumed to have a completely monotonic (CM)
relaxation modulus \citep{Bland:VE,Day70b,BerisEdwards93}. This assumption has important consequences 
for the attenuation and dispersion of acoustic waves. In particular, the increase of attenuation 
with frequency in the high frequency range is slower than linear \citep{HanWM2013}. This fact cannot 
be verified experimentally because experimental test cover the range of frequencies below 250 MHz. 
The results of ultrasonic tests of polymers and bio-tissues suggest a power-law dependence of 
attenuation in the frequency range accessible to measurements, $\mathcal{A}(\omega) = A\, \omega^\alpha$ with 
$\alpha > 1$. This power-law dependence is then extrapolated to high frequency range, where it contradicts 
various general principles. It is then explained in terms of {\em ad hoc} lossy wave equations 
\cite{Szabo1,Szabo2,KellyMcGoughMeerschaert08} and wave equations with fractional Laplacians 
\cite{ChenHolm04,TreebyCox10} which are incompatible with viscoelastic constitutive equations.

Superlinear growth of attenuation in the high frequency range is inconsistent with finite speed of propagation. 
The requirement of finite speed of propagation imposes an upper growth limit $\omega/\ln(\omega)$
on high-frequency attenuation: the propagation speed is finite if  
$\omega/\ln^\alpha(\omega)$, $\alpha > 1$ but not for $\alpha \leq 1$. Superlinear growth of attenuation 
in the high frequency range is also inconsistent with viscoelastic
constitutive equations even if the propagation speed is infinite, for example it is inconsistent with Newtonian viscosity,
where the attenuation satisfies the power law $\const \times \omega^{1/2}$,
and with strongly singular locally integrable completely monotonic relaxation moduli \cite{HanWM2013}.

According to the theory of wave propagation in viscoelastic media with completely monotonic relaxation moduli, developed in 
\citet{SerHan2010,HanWM2013}, in the low frequency range the wave attenuation in a viscoelastic solid  
increases like $\omega^{1 + \alpha}$, $0 < \alpha \leq 1$,  while a 
viscoelastic fluid is characterized by a rate of increase $\omega^\alpha$,  $0 < \alpha < 1$. Low frequency
behavior of attenuation is accessible to measurements and is consistent with observations. In the high frequency 
range the growth of attenuation is however sublinear. High frequency attenuation is also sublinear 
in viscoelastic media with spatial derivative operators of fractional order \cite{SerHanJMP}.

Although practical modeling of viscoelastic media routinely assumes that the relaxation modulus
is completely monotonic, there is hardly any a priori
evidence for this assumption. The only general physical principle -- the fluctuation-dissipation theorem -- 
implies that the relaxation modulus 
is completely positive \citep{HanSerDielectrics07}. The consequences of complete positivity for the 
dependence of attenuation on frequency are however at present unknown. In this paper we explore the consequences of 
a different extension of the complete monotonicity assumption, based on the
assumption that the creep compliance is non-decreasing and concave. Non-negative 
functions with these two properties are said to belong to the CrF class, which is an abbreviation of 
Pr\"{u}ss' unfortunate term "creep functions". The above assumption, proposed 
and explored in Jan Pr\"{u}ss' book \citep{Pruss}, is much weaker than the hypothesis of a completely 
monotonic relaxation modulus, assumed in \citet{SerHan2010,HanWM2013}. The latter is equivalent to 
the creep compliance being a Bernstein function, which is defined in terms of an 
infinite sequence of inequalities for the derivatives of the creep compliance of arbitrary order. 
In the CrF class only the three first of these inequalities are required to hold.
The Bernstein class and the CrF class includes Newtonian viscosity characterized by linear creep.
Unlike the Bernstein  property, the CrF property of a function can be easily visually recognized from its plot. 
Furthermore, the CrF property is stable with respect to perturbations which are small in the 
sup norm. 

It will be shown here that wave attenuation in media with CrF creep compliances is bounded by a function 
$a + b \, \vert \omega \vert$. A sharper form of this bound was obtained for creep compliances in the Bernstein class 
in \citet{SerHan2010,HanWM2013}. 
The theory developed in \citet{SerHan2010,HanWM2013} relies on a spectral analysis of attenuation and dispersion.
Unfortunately a spectral analysis of attenuation and dispersion similar to the completely monotonic case
has not been possible but it is very likely that the bound obtained here for the CrF class of
creep compliances can be significantly sharpened.

After a recapitulation of necessary facts from the theory of CM functions and Bernstein functions 
(Sec.~\ref{sec:recoll}) the essentials of Pr\"{u}ss' theory are recalled in Sec.~\ref{sec:Pruss}. 
The main result concerning the bound on wave attenuation is derived in Sec.~\ref{sec:proof}.

\section{Some properties of completely monotonic and Bernstein functions.}
\label{sec:recoll}

\begin{definition}
A real function $f$ on $]0,\infty[$ is said to be completely monotonic (CM) if $f \in \mathcal{C}^n$
and $(-1)^n \, \D^n\, f(x) \geq 0$ for for every integer $n \geq 0$.\\
\end{definition}

\begin{theorem} (Bernstein) \\
A real function $f$ on $ ]0,\infty[$ is CM if and only if there is a positive Radon measure 
$\mu$ on $[0,\infty[$ such that
\begin{equation} \label{eq:Bernstein}
f(t) = \int_{[0,\infty[} \e^{-t r} \mu(\dd r), \qquad t > 0
\end{equation}
\end{theorem}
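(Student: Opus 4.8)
The plan is to prove the two implications separately, the ``if'' direction being a routine differentiation argument and the ``only if'' direction requiring an actual reconstruction of the representing measure.

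For sufficiency, suppose $f$ admits the representation \eqref{eq:Bernstein}. For every fixed $\delta > 0$ and every $n$ the map $r \mapsto r^n \e^{-tr}$ is dominated, uniformly for $t \geq \delta$, by a $\mu$-integrable majorant, so I would differentiate under the integral sign (justified by dominated convergence) to obtain $\D^n f(t) = \int_{[0,\infty[} (-r)^n \e^{-tr}\,\mu(\dd r)$. Hence $(-1)^n \D^n f(t) = \int_{[0,\infty[} r^n \e^{-tr}\,\mu(\dd r) \geq 0$ for all $t > 0$ and all $n$, which is complete monotonicity.

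For necessity, the idea is to exhibit $\mu$ as a vague limit of explicit approximating measures built directly from the derivatives of $f$, following the Post--Widder inversion. For each $n \geq 1$ I would introduce the density
\[
g_n(r) := \frac{(-1)^n}{n!}\left(\frac{n}{r}\right)^{n+1} \D^n f\!\left(\frac{n}{r}\right), \qquad r > 0,
\]
which is nonnegative precisely because $(-1)^n \D^n f \geq 0$, and set $\mu_n(\dd r) := g_n(r)\,\dd r$. The argument then proceeds in three steps: (a) show that the Laplace transforms $\int_0^\infty \e^{-tr}\,\mu_n(\dd r)$ converge to $f(t)$ for each $t > 0$; (b) bound the masses $\mu_n([0,R])$ uniformly in $n$; and (c) extract a vaguely convergent subsequence and pass to the limit. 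For (b) a change of variable $s = n/r$ gives total mass $\int_0^\infty g_n(r)\,\dd r = f(0^+)$ when this is finite, and otherwise one tilts by $\e^{-r t_0}$ and uses (a) at a fixed $t_0 > 0$ to get $\mu_n([0,R]) \leq \e^{R t_0}\,\sup_n \int_0^\infty \e^{-r t_0}\,\mu_n(\dd r) < \infty$; Helly's selection theorem then produces a subsequence $\mu_{n_j}$ converging vaguely to a positive Radon measure $\mu$ on $[0,\infty[$.

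The main obstacle is step (a), the Post--Widder limit, which is the technical heart of the proof: I would apply Laplace's (saddle-point) method to the kernel $g_n$, which concentrates near $r$ and reproduces $f$ in the limit, and the quantitative tail estimates there are the delicate part. A second difficulty arises in step (c): since $r \mapsto \e^{-tr}$ is bounded and continuous but does not vanish at $r = 0$, vague convergence alone does not transfer the Laplace transform, so I would supplement it with a tightness argument showing that no mass escapes to $r = \infty$ (the decay of $\e^{-tr}$ together with the uniform bounds controls the tails). This yields $\int_{[0,\infty[}\e^{-tr}\,\mu(\dd r) = \lim_j \int_0^\infty \e^{-tr}\,\mu_{n_j}(\dd r) = f(t)$, with any atom of $\mu$ at $r = 0$ accounting for $\lim_{t\to\infty} f(t) = \mu(\{0\})$. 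This establishes \eqref{eq:Bernstein} and completes the proof.
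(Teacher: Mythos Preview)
The paper does not give a proof of this statement: Bernstein's theorem is quoted as a classical result, with the attribution ``(Bernstein)'' serving in lieu of a proof or a citation. There is therefore nothing in the paper to compare your proposal against.

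Your outline is one of the standard routes to Bernstein's theorem, essentially the approach in Widder's \emph{The Laplace Transform}: the sufficiency direction is routine, and for necessity you build approximating measures from the Post--Widder densities, extract a vague limit via Helly, and identify the limit through the Laplace transform. The plan is sound and you have correctly flagged the two genuine technical points --- the Laplace-method estimate underlying the Post--Widder convergence, and the need to upgrade vague convergence so that integration against the bounded but non-compactly-supported test function $\e^{-tr}$ passes to the limit. One small remark: in step~(b) your change of variable gives $\int_0^\infty g_n(r)\,\dd r$ as an integral involving $\D^n f$ over $]0,\infty[$, and identifying this with $f(0^+)$ already requires the iterated-integration argument that is part of the main work, so the tilted bound via a fixed $t_0$ is really the operative one in all cases. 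None of this is a gap; it is just where the effort lies in a full write-up.
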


A CM function is locally integrable if and only it is integrable over the interval $[0,1]$.
The set of locally integrable CM (in short LICM) functions is denoted by $\LICM$.

\begin{theorem} \citep{HanDuality,HanWM2013}\\
A real function $f$ is LICM if and only if there is a positive Radon measure $\mu$ such that 
\begin{equation} \label{eq:doss}
\int_{[0,\infty[} \frac{\mu(\dd r)}{1 + r} < \infty
\end{equation}
and eq.~\eqref{eq:Bernstein} holds.
\end{theorem}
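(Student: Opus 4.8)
The plan is to reduce both implications to a single computation of $\int_0^1 f(t)\,\dd t$ via the Bernstein representation. Recall that $f$ is LICM precisely when $f$ is CM and integrable over $[0,1]$, and that by Bernstein's theorem any CM function $f$ admits the representation \eqref{eq:Bernstein} for a positive Radon measure $\mu$ on $[0,\infty[$. The whole argument hinges on an elementary kernel estimate: for $r \geq 0$ the quantity
$$
\int_0^1 \e^{-t r}\,\dd t = \frac{1 - \e^{-r}}{r}
$$
(interpreted as $1$ at $r = 0$) is comparable to $1/(1+r)$, i.e.\ there are constants $c_1, c_2 > 0$ with
$$
\frac{c_1}{1 + r} \leq \frac{1 - \e^{-r}}{r} \leq \frac{c_2}{1 + r}, \qquad r \geq 0 .
$$
This follows by inspecting the regimes $r \to 0^+$, where the middle term tends to $1$, and $r \to \infty$, where it behaves like $1/r$; continuity and positivity on $]0,\infty[$ then close the gap.

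For the forward direction I would assume $f \in \LICM$, invoke \eqref{eq:Bernstein}, and integrate over $[0,1]$. Since the integrand $\e^{-t r}$ is non-negative, Tonelli's theorem justifies exchanging the order of integration, giving
$$
\int_0^1 f(t)\,\dd t = \int_{[0,\infty[} \frac{1 - \e^{-r}}{r}\,\mu(\dd r) .
$$
The left-hand side is finite by hypothesis, so the right-hand side is finite, and the lower kernel bound then forces $\int_{[0,\infty[} \mu(\dd r)/(1+r) < \infty$, which is \eqref{eq:doss}.

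For the converse I would start from a positive Radon measure $\mu$ satisfying \eqref{eq:doss} and define $f$ by \eqref{eq:Bernstein}. First I must check that $f$ is well defined and finite for each $t > 0$: for fixed $t > 0$ the product $(1 + r)\,\e^{-t r}$ is bounded on $[0,\infty[$, so $\e^{-t r} \leq C_t/(1+r)$ and the integral converges by \eqref{eq:doss}. Differentiation under the integral sign, justified by the same domination on compact $t$-intervals, yields $(-1)^n\,\D^n f(t) = \int_{[0,\infty[} r^n\,\e^{-t r}\,\mu(\dd r) \geq 0$, so $f$ is CM. Finally, applying Tonelli exactly as above and using the upper kernel bound gives $\int_0^1 f(t)\,\dd t \leq c_2 \int_{[0,\infty[}\mu(\dd r)/(1+r) < \infty$, so $f$ is integrable over $[0,1]$ and hence LICM.

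The main obstacle lies less in any single step than in assembling the measure-theoretic bookkeeping cleanly: verifying the Tonelli exchange in both directions, confirming that the kernel comparison holds uniformly including the endpoint $r = 0$ (so that a possible atom of $\mu$ at the origin is handled correctly), and justifying differentiation under the integral to recover complete monotonicity in the converse. None of these is deep, but the two-sided estimate $c_1/(1+r) \leq (1-\e^{-r})/r \leq c_2/(1+r)$ is precisely what makes \eqref{eq:doss} the \emph{exact} integrability threshold rather than merely a sufficient or a necessary condition.
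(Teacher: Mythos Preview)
Your argument is correct. The paper does not actually prove this theorem; it is stated with a citation to \cite{HanDuality,HanWM2013} and no proof is given in the text, so there is nothing to compare against at the level of method. Your route via the kernel comparison $(1-\e^{-r})/r \asymp 1/(1+r)$ and Tonelli is the standard one and would serve as a self-contained replacement for the bare citation.
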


\begin{definition}
A real function $f$ defined on $[0,\infty[$ is said to be a Bernstein function (BF) if it has a derivative
which is a LICM function.\\
The set of Bernstein functions is denoted by the symbol $\BF$.
\end{definition}

It is obvious that a Bernstein function is non-negative, non-decreasing and concave.

From Bernstein's theorem the following representation of Bernstein functions 
can be derived:
\begin{theorem}\label{th:BernsteinSpectral} \citep{BergForst}\\
A function $f: \mathbb{R}_+ \rightarrow \mathbb{R}$ is a Bernstein function if 
and only if the are two non-negative real numbers 
$a, b$ and a positive Radon measure $\lambda$ on $\mathbb{R}_+$ such that 
\begin{equation} \label{eq:BS1}
\int_{]0,\infty[} \frac{r}{1 + r} \, \lambda(\dd r) < \infty
\end{equation}
and
\begin{equation} \label{eq:BernsteinSpectral}
f(t) = a + b \, t + \int_{]0,\infty[} \left[1 - \e^{-r t}\right] \, \lambda(\dd r), 
\qquad t > 0
\end{equation}
\end{theorem}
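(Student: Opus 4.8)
The plan is to reduce the statement to the representation theorem for LICM functions stated above (the theorem attached to eq.~\eqref{eq:doss}). By definition $f \in \BF$ means precisely that $f' \in \LICM$, so that theorem supplies a positive Radon measure $\mu$ on $[0,\infty[$ with $\int_{[0,\infty[} \mu(\dd r)/(1+r) < \infty$ and $f'(t) = \int_{[0,\infty[} \e^{-tr}\,\mu(\dd r)$. The idea is to integrate this identity in $t$ and to absorb the mass of $\mu$ at the origin into the linear term $b\,t$.

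First I would split $\mu = b\,\delta_0 + \nu$, where $b := \mu(\{0\}) \geq 0$ and $\nu$ is the restriction of $\mu$ to $]0,\infty[$, so that $f'(t) = b + \int_{]0,\infty[} \e^{-tr}\,\nu(\dd r)$. Since $f' \in \LICM$ is locally integrable and $f$ is everywhere differentiable with this derivative, the fundamental theorem of calculus gives $f(t) = f(0) + \int_0^t f'(s)\,\dd s$. Integrating and interchanging the order of integration by Tonelli (all integrands being non-negative) yields
\[
f(t) = f(0) + b\,t + \int_{]0,\infty[} \frac{1 - \e^{-tr}}{r}\,\nu(\dd r).
\]
Setting $a := f(0)$ and defining the measure $\lambda$ on $]0,\infty[$ by $\lambda(\dd r) := \nu(\dd r)/r$ then produces eq.~\eqref{eq:BernsteinSpectral}. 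Non-negativity $a \geq 0$ follows because a Bernstein function is non-negative; and the moment condition \eqref{eq:BS1} follows from the change of measure, since $\int_{]0,\infty[} r/(1+r)\,\lambda(\dd r) = \int_{]0,\infty[} \nu(\dd r)/(1+r) \leq \int_{[0,\infty[} \mu(\dd r)/(1+r) < \infty$.

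For the converse I would start from eq.~\eqref{eq:BernsteinSpectral} and differentiate under the integral sign, which is legitimate for $t > 0$ because $r\,\e^{-tr}$ is dominated uniformly on compact $t$-intervals by an integrable function of $r$, obtaining $f'(t) = b + \int_{]0,\infty[} r\,\e^{-tr}\,\lambda(\dd r)$. Reassembling the measure $\mu$ by $\mu(\{0\}) = b$ and $\mu(\dd r) = r\,\lambda(\dd r)$ on $]0,\infty[$ writes $f'$ in the form \eqref{eq:Bernstein}, and \eqref{eq:BS1} translates back into $\int_{[0,\infty[} \mu(\dd r)/(1+r) = b + \int_{]0,\infty[} r/(1+r)\,\lambda(\dd r) < \infty$, i.e.\ eq.~\eqref{eq:doss}. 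Hence $f' \in \LICM$ and therefore $f \in \BF$.

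The main obstacle is the careful bookkeeping at the origin: the atom of $\mu$ at $r = 0$ must be separated before integrating, since otherwise the change of measure $\lambda = \nu/r$ is singular there, and one must verify that the two moment conditions \eqref{eq:doss} and \eqref{eq:BS1} are genuinely equivalent under the reciprocal maps $\nu = r\,\lambda$ and $\lambda = \nu/r$. A secondary point requiring explicit justification is the fundamental theorem of calculus: it applies because $f$ is differentiable everywhere with a locally integrable derivative, and this should be stated rather than taken for granted.
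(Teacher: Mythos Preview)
Your argument is correct and is precisely the derivation the paper alludes to. Note that the paper does not actually supply a proof of this theorem: it is cited from Berg--Forst, with only the remark that the representation ``can be derived'' from Bernstein's theorem. Your proposal carries out exactly that derivation, routing through the LICM characterization (Theorem attached to eq.~\eqref{eq:doss}), which is itself Bernstein's theorem together with the local-integrability condition. So there is nothing to compare against; you have filled in what the paper leaves to the reference.

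One cosmetic point worth tightening: you write that ``$f$ is everywhere differentiable,'' but this need not hold at $t=0$. For instance $f(t)=2\sqrt{t}$ has $f'(t)=t^{-1/2}\in\LICM$ yet $f$ is not differentiable at the origin. The fundamental-theorem-of-calculus step you invoke only requires $f$ to be continuous on $[0,\infty[$ and differentiable on $]0,\infty[$ with locally integrable derivative, and that is exactly what the paper's definition of $\BF$ provides (the function is defined on $[0,\infty[$ and $f'\in\LICM$). So the identity $f(t)=f(0)+\int_0^t f'(s)\,\dd s$ still holds; just adjust the justification accordingly.
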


Inequality~\eqref{eq:BS1} is equivalent to the following pair of inequalities
\begin{equation} \label{eq:twoineq}
\int_{]0,1]} r \nu(\dd r) < \infty, \qquad \int_{]1,\infty[} \nu(\dd r) < \infty
\end{equation}

\begin{theorem} \label{thm:lim}
If $f$ satisfies eq.~\eqref{eq:BernsteinSpectral} with $\lambda$ satisfying
eq.~\eqref{eq:BS1}, then 
$\lim_{t \rightarrow \infty} f(t)/t = b$. 
\end{theorem}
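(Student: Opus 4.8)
The plan is to
show that the integral term in eq.~\eqref{eq:BernsteinSpectral} grows strictly slower
than linearly, so that dividing by $t$ and sending $t\to\infty$ leaves only the $b\,t$
contribution. Writing $f(t)/t = a/t + b + t^{-1}\int_{]0,\infty[}[1-\e^{-rt}]\,\lambda(\dd r)$,
the first term vanishes in the limit, so the whole problem reduces to proving that
\begin{equation*}
I(t) := \frac{1}{t}\int_{]0,\infty[}\bigl[1-\e^{-rt}\bigr]\,\lambda(\dd r) \longrightarrow 0
\qquad\text{as } t\to\infty.
\end{equation*}
The integrand $g_t(r):=t^{-1}(1-\e^{-rt})$ is non-negative, and I would estimate it by
splitting the domain of integration at $r=1$, exploiting the two separate integrability
conditions in eq.~\eqref{eq:twoineq} on the two pieces.

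On the tail $]1,\infty[$ the elementary bound $1-\e^{-rt}\le 1$ gives
$g_t(r)\le 1/t$, whence that part of $I(t)$ is at most $t^{-1}\int_{]1,\infty[}\lambda(\dd r)$,
which tends to $0$ because the second integral in eq.~\eqref{eq:twoineq} is finite.
On $]0,1]$ I would instead use the convexity bound $1-\e^{-rt}\le r\,t$, so that
$g_t(r)\le r$ uniformly in $t$; since $\int_{]0,1]} r\,\lambda(\dd r)<\infty$ by the
first inequality in eq.~\eqref{eq:twoineq}, the family $\{g_t\mathbf{1}_{]0,1]}\}$ has an
integrable dominating function. For each fixed $r>0$ one has $g_t(r)=t^{-1}(1-\e^{-rt})\to 0$
pointwise as $t\to\infty$, so dominated convergence forces the contribution of $]0,1]$ to
vanish as well. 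Adding the two pieces yields $I(t)\to 0$ and hence the claimed limit.

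The main obstacle, and the only place requiring genuine care, is the handling of the
small-$r$ region: there the pointwise decay of $g_t(r)$ is not uniform, so a crude bound
does not suffice and one must invoke dominated convergence with the dominating function
$r\mapsto r$ supplied precisely by eq.~\eqref{eq:BS1} (equivalently the first inequality in
eq.~\eqref{eq:twoineq}). The tail region, by contrast, is controlled by a trivial uniform
bound. An alternative to the dominated-convergence argument on $]0,1]$ would be to fix
$\varepsilon>0$, split $]0,1]$ further at some small $\delta$, bound the $]\delta,1]$ part
by $t^{-1}\lambda(]\delta,1])\to0$ and the $]0,\delta]$ part by $\int_{]0,\delta]}r\,\lambda(\dd r)<\varepsilon$
for $\delta$ small; this avoids quoting the convergence theorem but is essentially the same
estimate. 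Either route closes the proof.
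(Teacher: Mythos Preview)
Your proof is correct and follows essentially the same strategy as the paper's: write $f(t)/t = a/t + b + \int(1-\e^{-rt})t^{-1}\,\lambda(\dd r)$, observe that the integrand tends to zero pointwise, and invoke dominated convergence via eq.~\eqref{eq:BS1}. The only cosmetic difference is that the paper produces a single global dominating function $2r/(1+r)$ (through the inequality $1-\e^{-t}\le 2t/(1+t)$) rather than splitting at $r=1$ and handling the tail by the trivial uniform bound; both routes are valid and of the same length.
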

\begin{proof}
Expansion in power series shows that 
$(1 + t)^2/2 \leq \e^t \leq (1 + t)^2$ for all $t \geq 0$.  Hence
$(1 + t)^{-2} \leq \e^{-t} \leq 2 (1 + t)^{-2}$ for $t \geq 0$ and, upon integration,
$t/(1 + t) \leq 1 - \e^{-t} \leq 2 t/(1 + t)$. 

Now
\begin{equation} \label{eq:xxx}
\frac{f(t)}{t} = \frac{a}{t} + b + \int_{]0,\infty[} \frac{1 - \e^{-r t}}{t} \, \lambda(\dd r)
\end{equation}
The integrand of the integral on the right-hand side of eq.~\eqref{eq:xxx}
tends to 0 as $t \rightarrow \infty$. The integrand is also bounded from above by $2 r/(1 + r t)$
and by $2 r/(1 + r)$ for $t \geq 1$. In view of eq.~\eqref{eq:BS1} and the Lebesgue Dominated Convergence
Theorem the integral on the right-hand side of eq.~\eqref{eq:xxx} tends to 0 as $t \rightarrow \infty$.
The thesis follows from eq.~\eqref{eq:xxx}. 
\end{proof}

\begin{theorem} \label{thm:BFCM} \citep{BernsteinFunctions}\\
If the real function $f$ defined on $]0,\infty[$ is CM and $g \in \BF$,
then the composition $f\circ g$ is CM.\\
If $f, g$ are Bernstein functions on $[0,\infty[$, then the composition 
$f \circ g$ is a Bernstein function.\\
If a positive real function $f$ defined on $[0,\infty[$ is a BF then
$1/f$ is CM on $]0,\infty[$.
\end{theorem}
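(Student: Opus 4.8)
The three assertions are linked, so the plan is to prove the first one directly and then obtain the other two as corollaries. Throughout I may assume the outer and inner functions take values where they are defined; for the first assertion I would first note that if $g \in \BF$ is not identically zero, then concavity together with monotonicity and non-negativity forces $g(x) > 0$ for every $x > 0$, so that $f(g(x))$ is well defined on $]0,\infty[$.

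For the first assertion ($f$ CM, $g \in \BF \Rightarrow f \circ g$ CM) I would argue from the sign pattern of the iterated derivatives via the Fa\`a di Bruno formula. Because $f$ is CM, $(-1)^k\,\D^k f \geq 0$. Because $g \in \BF$, its derivative $g'$ is LICM, hence $(-1)^{j-1}\,\D^{j} g = (-1)^{j-1}\,\D^{j-1} g' \geq 0$ for $j \geq 1$, while $g \geq 0$. Fa\`a di Bruno expresses $\D^n(f\circ g)$ as a sum, with strictly positive combinatorial coefficients, of terms of the form $f^{(k)}(g)\,\prod_{j}\bigl(g^{(j)}\bigr)^{k_j}$ subject to $\sum_j j\,k_j = n$ and $\sum_j k_j = k$. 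The factor $f^{(k)}(g)$ carries the sign $(-1)^k$, while the product $\prod_j \bigl(g^{(j)}\bigr)^{k_j}$ carries the sign $(-1)^{\sum_j (j-1)k_j} = (-1)^{n-k}$, so every term carries the common sign $(-1)^n$. Summing, $(-1)^n\,\D^n(f\circ g) \geq 0$ for all $n$, which is exactly the CM property.

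The remaining two assertions follow quickly. For the second, I would differentiate once: $(f\circ g)' = (f'\circ g)\,g'$. Since $f \in \BF$, its derivative $f'$ is LICM, in particular CM, so by the first assertion $f'\circ g$ is CM; and $g'$ is CM. The product of two CM functions is CM (by Leibniz and the same sign count: $(-1)^n\,\D^n(hk) = \sum_j \binom{n}{j}\,(-1)^j\D^j h\,(-1)^{n-j}\D^{n-j}k \geq 0$). It remains to verify the local integrability built into the definition of $\BF$: the substitution $u = g(x)$ gives $\int_0^1 (f'\circ g)(x)\,g'(x)\,\dd x = f(g(1)) - f(g(0)) < \infty$, so $(f\circ g)' \in \LICM$; together with $f\circ g \geq 0$ this shows $f\circ g \in \BF$. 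For the third assertion, note that $\phi(s) = 1/s$ is CM on $]0,\infty[$ (its Bernstein measure is Lebesgue measure, since $1/s = \int_0^\infty \e^{-sr}\,\dd r$), and $f$ is a positive Bernstein function, so $1/f = \phi \circ f$ is CM by the first assertion.

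The main obstacle is the bookkeeping in the first assertion: one must handle the Fa\`a di Bruno multi-index sum carefully and confirm that the two sign contributions always combine to $(-1)^n$ independently of the partition, which is the only nontrivial point. An alternative route avoiding the combinatorics would use Bernstein's representation $f(s) = \int_{[0,\infty[}\e^{-sr}\,\mu(\dd r)$, write $f(g(x)) = \int_{[0,\infty[}\e^{-r\,g(x)}\,\mu(\dd r)$, and invoke the lemma that $x \mapsto \e^{-r\,g(x)}$ is CM for every $r \geq 0$ whenever $g \in \BF$, followed by the fact that positive mixtures of CM functions are CM (with differentiation under the integral justified by dominated convergence); however, proving that $\e^{-r g}$ is CM requires essentially the same derivative-sign analysis, so I would prefer the direct Fa\`a di Bruno argument above.
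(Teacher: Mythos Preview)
The paper does not supply its own proof of this theorem; it is quoted from the cited monograph of Schilling, Song and Vondra\v{c}ek, so there is nothing in the paper to compare against directly. Your argument is correct. The Fa\`a di Bruno sign analysis for the first assertion is a clean, self-contained route: the key observation that each summand carries the sign $(-1)^k\cdot(-1)^{n-k}=(-1)^n$ is exactly right, and the edge case $g\equiv 0$ is handled appropriately. The standard proof in the cited reference instead goes through the characterization that $g \in \BF$ if and only if $x \mapsto \e^{-t g(x)}$ is CM for every $t \geq 0$, combined with Bernstein's representation of $f$, so that $f \circ g = \int_{[0,\infty[} \e^{-r g(\cdot)}\,\mu(\dd r)$ is a positive mixture of CM functions --- precisely the ``alternative route'' you sketch at the end. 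Your Fa\`a di Bruno approach buys a purely differential argument with no appeal to integral representations, at the cost of the combinatorial bookkeeping you flag; the representation approach is shorter once the lemma that $\e^{-t g}$ is CM is available, but establishing that lemma typically uses either the integral representation \eqref{eq:BernsteinSpectral} of $g$ or essentially the same derivative-sign count. Your derivations of the second and third assertions from the first are standard and correct, and the local-integrability check via the substitution $u = g(x)$ is a detail that is often glossed over.
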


\section{The CrF class of functions.}
\label{sec:Pruss}

In every neighborhood of a bounded CM function in the space of bounded continuous functions 
$\mathcal{C}^0([0,\infty[)$ endowed with
the sup norm there is a function $g$ which is not CM, for example $g(x) = f(x) + \sin(x)/n$. 
Hence the CM property of the relaxation modulus cannot be experimentally confirmed by a direct 
method although it can be inconsistent with experimental data in a specific case. 
Thus complete monotonicity has to be considered as an a priori constraint on
the interpolation of experimental data by matching them against a multi-parameter family
of completely monotonic functions (Prony sums, Cole-Cole or Havriliak-Negami functions, the
Kohlrausch-Watts-Williams function). 

It is therefore interesting to note that the linear upper bound on the 
high frequency behavior of the attenuation function is not a consequence of complete
monotonicity of relaxation moduli. 
As an alternative to viscoelasticity based on completely monotonic relaxation functions,
a theory based on an 
experimentally verifiable concavity property of creep compliances will be presented here 
following \citet{Pruss}.

\begin{definition}
Let $I$ be a segment of the real line.

A function $f: I \rightarrow \mathbb{R}$ is said to be \emph{concave} if
$$\vartheta\,f(x) + (1-\vartheta)\, f(y) \leq f(\vartheta\, x + (1-\vartheta)\,y)$$
for all $x, y \in I$, $0 \leq \vartheta \leq 1$.
\end{definition}

Bernstein functions are non-negative, non-decreasing and concave. 

\begin{definition}
A function $f: \mathbb{R}_+\rightarrow \mathbb{R}$ is said to be a {\em CrF} if
it is non-negative, non-decreasing and concave.

$\CrF$ denotes the set of all the CrFs. 
\end{definition}

Every Bernstein function is obviously a CrF.

\begin{example}
The function 
\begin{equation} \label{eq:CrFexa}
f(x) := \int_0^x \e^{-y^\alpha} \, \dd y
\end{equation}
(Fig.~\ref{fig:CrF})
is a CrF for $\alpha > 0$, but it is not a BF if $\alpha > 1$.
Indeed, if $\alpha > 0$ then $f, f^\prime > 0$ while $f^{\prime\prime} < 0$.

It $\alpha \leq 1$ then $f^\prime(x) \equiv \e^{-x^\alpha}$ is CM, hence $f \in \BF$. 

If $\alpha > 1$ then $f^{\prime\prime\prime}(x) \equiv \alpha \, x^{\alpha-2}\, [\alpha \, x 
- (\alpha - 1)] \, \e^{-x^\alpha}$ changes sign at $x = (\alpha-1)/\alpha$. Hence
$f$ is not a Bernstein function in this case.
\end{example}

\begin{figure}
\begin{minipage}{0.48\textwidth}
\epsfig{file=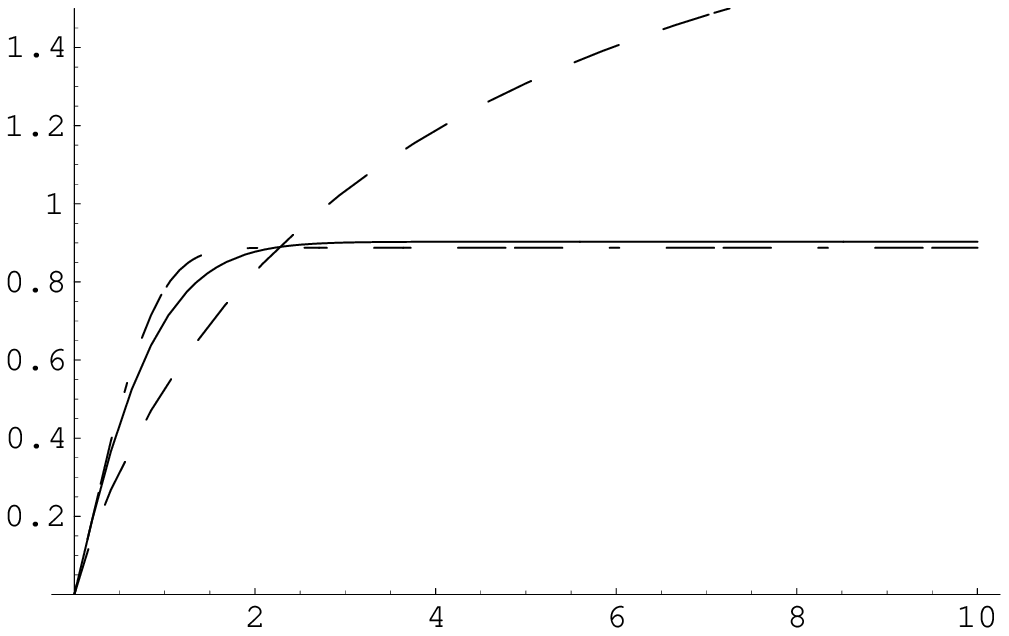,width=\textwidth}
\begin{center}  $(a)$
\end{center}
\end{minipage}
\begin{minipage}{0.48\textwidth}
\epsfig{file=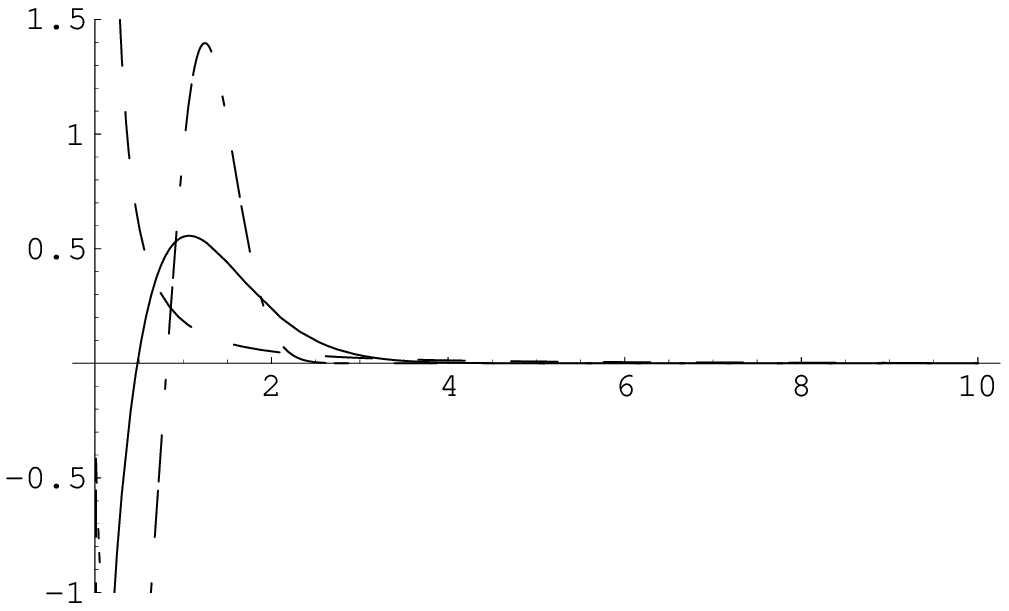,width=\textwidth}
\begin{center}  $(b)$
\end{center}
\end{minipage}
\caption{$(a)$: The function defined by \eqref{eq:CrFexa}, for $\alpha=1/2$ (dashed),
$\alpha=3/2$ (solid) and $\alpha=5/2$ (dot-dashed). $(b)$ The 3rd order derivative
$\D^3 f$ for the same values of $\alpha$.}
\label{fig:CrF}
\end{figure}

\begin{theorem} \label{CFrep}
Every CrF $f$ can be expressed in the form
\begin{equation} \label{eq:CrFrep}
f(t) = a + b \, t + \int_0^t k(s) \, \dd s
\end{equation}
where $a, b$ are non-negative numbers and the locally integrable function 
$k$ is non-negative, non-increasing and 
$k(s) \rightarrow 0$ for $s \rightarrow \infty$. 
\end{theorem}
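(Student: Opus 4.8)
The plan is to construct the triple $(a,b,k)$ directly from $f$ and its one-sided derivative, exploiting the classical regularity theory of concave functions. Since $f$ is concave on $[0,\infty[$, its right derivative $f'_+(t)$ exists and is finite at every $t > 0$, is a non-increasing function of $t$, and $f$ is continuous on $]0,\infty[$ and locally absolutely continuous there. Because $f$ is in addition non-decreasing, $f'_+(t) \geq 0$ for all $t > 0$. I would set $a := f(0)$ (or, if one wants to be scrupulous about the endpoint, $a := \lim_{t\to 0^+} f(t)$), which is non-negative because $f$ is non-negative.

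Next, since $f'_+$ is non-increasing and bounded below by $0$, the limit $b := \lim_{t\to\infty} f'_+(t)$ exists and satisfies $b \geq 0$. I would then define $k(s) := f'_+(s) - b$ for $s > 0$. Three of the required properties of $k$ are then immediate: $k$ is non-increasing because $f'_+$ is; $k(s) \geq 0$ because $f'_+$ is non-increasing with infimum $b$; and $k(s) \to 0$ as $s \to \infty$ by the definition of $b$.

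The representation itself comes from the fundamental theorem of calculus for concave functions. On each compact subinterval $[\epsilon, t] \subset\, ]0,\infty[$ absolute continuity gives $f(t) - f(\epsilon) = \int_\epsilon^t f'_+(s)\,\dd s$; letting $\epsilon \downarrow 0$ and using continuity of $f$ at the origin together with monotone convergence ($f'_+ \geq 0$) yields
\[
f(t) = f(0) + \int_0^t f'_+(s)\,\dd s = a + b\,t + \int_0^t k(s)\,\dd s .
\]
This single identity simultaneously proves the representation and shows $\int_0^t f'_+(s)\,\dd s = f(t) - f(0) < \infty$, whence $f'_+$ --- and therefore $k = f'_+ - b$ --- is integrable on $[0,t]$ for every $t$; combined with the boundedness of $k$ on $[\delta,\infty[$ (there it is dominated by the constant $k(\delta)$), this gives local integrability of $k$ on $[0,\infty[$.

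The step I expect to require the most care is the passage to the endpoint $t = 0$ in the fundamental theorem of calculus, because $f'_+$ may be unbounded near $0$ (as for $f(t) = \sqrt{t}$). Here one must ensure both that $f$ is continuous at the origin and that $\int_0^t f'_+(s)\,\dd s$ converges, so that the singular behaviour of $f'_+$ near $0$ is absorbed into a genuinely integrable $k$ rather than producing a boundary jump. Everything else --- existence and monotonicity of $f'_+$, the sign conditions, and the limit defining $b$ --- is a routine consequence of concavity together with the monotonicity hypothesis.
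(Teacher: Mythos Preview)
Your proof is correct and follows essentially the same route as the paper: set $a=f(0)$, use concavity to obtain a non-increasing non-negative (one-sided) derivative, subtract off its asymptotic value $b$ to get $k$, and recover the representation via the fundamental theorem of calculus for locally absolutely continuous functions. The only cosmetic difference is that the paper defines $b := \inf_{t>0} f(t)/t = \lim_{t\to\infty} f(t)/t$ rather than your $b := \lim_{t\to\infty} f'_+(t)$; the two quantities coincide, and your choice has the advantage of making $k(s)\to 0$ immediate instead of requiring the paper's separate (and somewhat rough) argument that $f'\geq b$.
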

\begin{proof}
Let $a := f(0)$ and 
$b := \inf_{t>0} f(t)/t$. The function $b$ assumes finite values because concavity 
$$f(\vartheta \, t) \geq (1-\vartheta) \, a + \vartheta \, f(t)$$
implies that
$$\frac{f(\vartheta \, t)}{\vartheta \, t} - 
\frac{1-\vartheta}{\vartheta \, t} a \geq \frac{f(t)}{t}$$
for $t > 0$ and $0 < \vartheta < 1$,
hence $f(t)/t$ is decreasing and has a finite limit $b$.
The function $f$ is non-negative and non-decreasing, hence it is bounded on 
an interval $[0,\varepsilon]$, $\varepsilon > 0$. 
Since $-f$ is convex, $f$ has a derivative $f^\prime$ at all but a countable 
set of points. The derivative $f^\prime$ is almost everywhere non-negative (because $f$ is non-decreasing)
and almost everywhere non-increasing (because $f$ is concave). The second-order derivative
$f^{\prime\prime}$ exists and is non-positive almost everywhere. Hence
$f^\prime(t) \leq f^\prime(t) + t \, f^{\prime\prime}(t) \equiv \dd [t \, f^\prime(t)]$
and thus,upon integration and dividing by $t$, $f^\prime(t) \geq f(t)/t - a/t$. Since $f(t)$ is 
non-decreasing, the right-hand side of this inequality is $\geq b$.   

The function $k(t) := f^\prime(t) - b$
is measurable, almost everywhere non-negative,  and non-increasing. 
Hence $k$ is locally integrable near 0 (otherwise $f$ would be infinite at 0). 
In view of monotonicity $\lim_{t \rightarrow \infty} k(t)$ exists and, in view 
of the definition of $b$, it vanishes.
\end{proof}  \hfill $\Box$

If $f$ is a CrF function satisfying eq.~\eqref{eq:CrFrep} then $\D f$ exists almost everywhere and $\D f = k$,
$f(0) = a$ and $\lim_{t\rightarrow \infty} \D f(t) = b$. 

\begin{examples}\label{ex:CrF} \mbox{ }
\begin{enumerate}
\item The function $f(t) := a + b\, t + c \, t^\alpha$, 
$a,b,c \geq 0$, $0 < \alpha < 1$,
is a CrF function, with $k(t) = \alpha\, t^{\alpha-1}$. 
\item The function $f$ defined above is not a CrF if $\alpha > 1$. 
\end{enumerate}
\end{examples}

\begin{theorem} \label{thm:BCrF} \mbox{ }
\begin{enumerate}[(i)]
\item If $f \in \CrF$ and $g(x) := x^2\, \tilde{f}(x)$ then $g \in \BF$.
\item If $f \in \BF$ then there is a CrF $g$ such that $f(x) = x^2\, \tilde{g}(x)$.
\end{enumerate}
\end{theorem}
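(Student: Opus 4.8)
The plan is to reduce both statements to the canonical representation of CrF functions (Theorem~\ref{CFrep}) and to the spectral representation of Bernstein functions (Theorem~\ref{th:BernsteinSpectral}), linked by a single measure-theoretic identity. I would start from $f \in \CrF$ written via \eqref{eq:CrFrep} as $f(t) = a + b\,t + \int_0^t k(s)\,\dd s$, with $k$ non-negative, non-increasing, locally integrable and $k(s)\to 0$. Since $f(t) = \OO(t)$, the Laplace transform converges for $\re p > 0$, and transforming term by term (using $\mathcal{L}[\int_0^{(\cdot)} k]\,(p) = \tilde k(p)/p$) gives $g(p) = p^2\,\tilde f(p) = b + a\,p + p\,\tilde k(p)$. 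As $a,b\ge 0$ supply the constant and linear parts of a Bernstein function, the whole of part~(i) comes down to showing $p\,\tilde k(p)\in\BF$.

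The decisive device is to represent the non-increasing kernel by a measure. Choosing the right-continuous version of $k$, the function $-k$ is non-decreasing, right-continuous and vanishes at infinity, so its Stieltjes measure $\sigma := \dd(-k)$ is a non-negative Radon measure on $]0,\infty[$ with $\sigma(]s,\infty[) = k(s)$. Substituting this into $\tilde k$ and interchanging the order of integration, which is legitimate by Tonelli's theorem since every integrand is non-negative, yields
\begin{equation}
p\,\tilde k(p) = p\int_0^\infty \e^{-ps}\!\int_{]s,\infty[}\sigma(\dd u)\,\dd s = \int_{]0,\infty[} \left(1-\e^{-pu}\right)\sigma(\dd u),
\end{equation}
which is exactly the spectral form \eqref{eq:BernsteinSpectral}. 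To apply Theorem~\ref{th:BernsteinSpectral} I must check \eqref{eq:BS1}, equivalently the pair \eqref{eq:twoineq} for $\sigma$: here $\sigma(]1,\infty[) = k(1) < \infty$ because $k$ is finite and non-increasing, while the same Tonelli computation gives $\int_0^1 k(s)\,\dd s = \int_{]0,1]} u\,\sigma(\dd u) + \sigma(]1,\infty[)$, so local integrability of $k$ forces $\int_{]0,1]} u\,\sigma(\dd u) < \infty$. Both inequalities hold, hence $p\,\tilde k(p)\in\BF$ and therefore $g = b + a\,p + p\,\tilde k(p)\in\BF$.

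Part~(ii) runs the construction backwards. Given $f\in\BF$, Theorem~\ref{th:BernsteinSpectral} supplies $a,b\ge 0$ and $\lambda$ satisfying \eqref{eq:BS1} with $f(t) = a + b\,t + \int_{]0,\infty[}[1-\e^{-rt}]\,\lambda(\dd r)$. I would then define the kernel $m(s) := \lambda(]s,\infty[)$ and set $g(t) := a + b\,t + \int_0^t m(s)\,\dd s$. The function $m$ is non-negative and non-increasing by monotonicity of $s\mapsto\lambda(]s,\infty[)$, it tends to $0$ because $\lambda(]1,\infty[)<\infty$, and the identity $\int_0^1 m(s)\,\dd s = \int_{]0,1]} u\,\lambda(\dd u) + \lambda(]1,\infty[) < \infty$ secures local integrability; thus $g$ is non-negative, has the non-increasing derivative $b + m$, and is consequently a CrF. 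Running the Laplace computation of part~(i) in reverse, with $\sigma$ replaced by $\lambda$, gives $x^2\,\tilde g(x) = b + a\,x + \int_{]0,\infty[}(1-\e^{-xu})\,\lambda(\dd u) = f(x)$, as required.

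The computations are short once the right framework is in place, and I expect the only genuine obstacle to be the measure-theoretic bookkeeping around the kernel: establishing $k(s) = \sigma(]s,\infty[)$ for a merely monotone (hence only almost everywhere differentiable) $k$, fixing a right-continuous version so that the Stieltjes measure is well defined, and translating the two local-integrability facts about $k$ into the two spectral conditions \eqref{eq:twoineq}. The Laplace inversions themselves are unproblematic, because every integrand in sight is non-negative, so Tonelli's theorem justifies each interchange with no domination estimates, and the growth bound $f(t) = \OO(t)$ guarantees convergence of all transforms for $\re p > 0$.
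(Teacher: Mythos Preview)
Your approach is essentially the paper's: both compute the Laplace transform of the CrF representation~\eqref{eq:CrFrep}, identify the non-increasing kernel $k$ with a positive Radon measure via $\sigma(]s,\infty[)=k(s)$ (equivalently $\sigma=-\dd k$), and translate local integrability of $k$ into the spectral condition~\eqref{eq:BS1}/\eqref{eq:twoineq}. Where the paper phrases the key step as an integration by parts, $p\,\tilde k(p)=-\int_{]0,\infty[}[1-\e^{-rp}]\,\dd k(r)$, you phrase it as a Tonelli interchange; this is the same computation, and your version is if anything more explicit about the measure-theoretic housekeeping (right-continuous modification, finiteness of $k(s)$ for $s>0$).

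There is, however, a bookkeeping slip in part~(ii). With your definition $g(t)=a+b\,t+\int_0^t m(s)\,\dd s$ one gets $x^2\,\tilde g(x)=a\,x+b+\int_{]0,\infty[}(1-\e^{-xu})\,\lambda(\dd u)$, which equals $f(x)=a+b\,x+\int(1-\e^{-xu})\,\lambda(\dd u)$ only when $a=b$. The roles of $a$ and $b$ must be swapped: set $g(t):=b+a\,t+\int_0^t m(s)\,\dd s$ (this is what the paper does), so that the constant term $a$ of $f$ becomes the slope of $g$ and the slope $b$ of $f$ becomes $g(0)$. With this correction your argument goes through verbatim.
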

\begin{proof}
Ad (i)\\
Eq.~\eqref{eq:CrFrep} implies that $f$ has the Laplace transform
$$\tilde{f}(p) = \frac{a}{p} + \frac{b}{p^2} + \frac{1}{p} \tilde{k}(p)$$
hence
$$p^2 \, \tilde{f}(p) = a\, p + b + p\, \tilde{k}(p)$$
But
$$
p\, \tilde{k}(p) = p \int_0^\infty \e^{-p r} \, k(r)\, \dd r = 
\int_0^\infty \dd \left[1 - \e^{-p r}\right] \, k(r) = 
-\int_0^\infty \left[ 1 - \e^{-r p}\right] \, \dd k(r)
$$
Define the Radon measure $\lambda$ on $\mathbb{R}_+$ by the formula $\int \varphi(r) \, \lambda(\dd r) =
-\int \varphi(r) \, \dd k(r)$ for continuous functions $\varphi$ with compact support in $\mathbb{R}_+$. 
The measure $\lambda$ is positive and 
$$\int_{]0,\infty[} \frac{r}{1 + r} \lambda(\dd r) = - \int_{]0,\infty[} 
\frac{r \, \dd k(r)}{1 + r} = \int_0^\infty \frac{k(r)}{(1 + r)^2} \dd r < \infty$$
We have used the facts that $\lim_{r\rightarrow 0+} r\, k(r) = 0$ (because $k$ is locally integrable) 
and $k$ is non-increasing. Consequently the Radon measure $\lambda$
satisfies inequality \eqref{eq:BS1}. We now have the identity 
$$p\, \tilde{k}(p) = \int_0^\infty \left[ 1 - \e^{-r p}\right] \, \dd k(r)$$
We thus see that the function $g$ has the integral representation \eqref{eq:BernsteinSpectral} and is thus a 
Bernstein function, q.e.d.

If the limit $k_0 := \lim_{r\rightarrow 0+} k(r)$ exists then $\nu$ can be defined as the Borel measure 
$\nu(]r,s]) = k(r) - k(s)$ for $0 \leq r \leq s$, provided 
$k$ is redefined as a right-continuous function. 

Ad (ii)\\
The function $f$ satisfies eq.~\eqref{eq:BernsteinSpectral} with the Radon measure $\lambda$ 
satisfying inequality \eqref{eq:BS1}. Since $r/(1 + r) \geq 1/2$ for $r > 1$, it follows from eq.~\eqref{eq:BS1} 
that for every $r > 0$ the right-continuous function 
$k(r) := \lambda(]r,\infty[) = \int_{]r,\infty[} \lambda(\dd \xi)$ is non-negative and finite. Since
$$\int_a^b k(r) \, \dd r \equiv  \int_a^b r \,\lambda(\dd r) + b\, k(b) - a \, k(a)$$
is in view of eq.~\eqref{eq:twoineq}$_1$ finite for $0 \leq a < b$, 
the function $k$ is locally integrable on $\overline{\mathbb{R}_+}$. It is also non-negative, 
non-increasing and, on account of \eqref{eq:BS1}, it tends to zero at infinity. The integral 
in eq.~\eqref{eq:BernsteinSpectral} can be integrated by parts:
$$-\int_{]0,\infty[} \left(1 - \e^{-p r}\right)\, \dd k(r) =
p \int_{]0,\infty[} \e^{-p r} \, k(r) \, \dd r = p \, \tilde{k}(p)$$
Hence $f(t) = p^2 \, \tilde{g}(p)$ with $g(t) := a\, t + b + \int_0^t k(s)\, \dd s$
and $g \in \CrF$ by Theorem~\ref{CFrep}.
\end{proof}

\begin{remark}
The definition of a complete Bernstein function (Appendix~\ref{app:A}) is equivalent to
the statement that, assuming again the equation $g(x) = x^2 \, \tilde{f}(x)$, 
$g \in \CBF$ if and only if $f \in \BF$.
 \end{remark}

\section{The wave number function in materials with a CrF creep compliance.}
\label{sec:waves}

We now consider a viscoelastic material whose creep compliance $J$ is a CrF.

Consider the Fourier and Laplace transform of the viscoelastic equation of motion assume
\begin{equation} \label{eq:dispeq}
\rho\, p^2 \, \hat{\tilde{u}} = -p\, \tilde{G}(p)\,k^2 \, \hat{\tilde{u}}
\end{equation}
and note that $p\,\tilde{G}(p) = 1/[p \, \tilde{J}(p)]$ \citep{HanDuality}. Define the wave number
function $\kappa(p)$, $p \in \mathbb{C}$,  in such a way that $k = -\ii \kappa(p)$ and $\re \kappa(p) \geq 0$
is a solution of eq.~\eqref{eq:dispeq}.

In view of eq.~\eqref{eq:dispeq} the wave number function can be expressed in terms 
of the Laplace transform $\tilde{J}$ of the creep compliance  
\begin{equation}
\kappa(p) = \rho^{1/2}\, p^{1/2}\, \left[p^2\, \tilde{J}(p)\right]^{1/2}
\end{equation}
If $J \in \CrF$ then, by Theorem~\ref{thm:BCrF}, $f(p) := p^2\, \tilde{J}(p)$ is a Bernstein function.
Hence $\kappa = [p f(p)]^{1/2} \in \mathcal{K}_\CrF := p^{1/2}\, \BF^{1/2}$, where $\BF^\alpha := \{ f^\alpha \mid f \in \BF\}$. 
We note that $\BF^\alpha \subset \BF$ for $0 \leq \alpha \leq 1$ because $f^\alpha$ is a composition of two 
Bernstein functions $x \rightarrow x^\alpha$ and $f$, see Theorem~\ref{thm:BFCM}. However $p \, f(p)$, $f \in \BF$, is 
in general not a Bernstein function, for example if $f(p) = 1 - \e^{-p}$. 

In \citet{HanWM2013} it was proved that $\mathcal{K}_\BF := \CBF \cap p^{1/2}\, \CBF$
is the set of all wave number functions compatible with the assumption that the relaxation modulus is LICM,
or, equivalently, that the creep compliance is a Bernstein function \citep{HanDuality}. 
The set of Bernstein functions $\BF$ is a subset of $\CrF$. We thus expect that the set 
$\mathcal{K}_\BF \subset \mathcal{K}_\CrF$. In order to check this fact directly 
we shall need the following theorem (Proposition~7.11 in \citet{BernsteinFunctions}):
\begin{theorem} \label{thm:auxCBF}
If $0 < \alpha \leq 1$ then 
$\CBF^\alpha = \{g \in \CBF \mid p^{1-\alpha}\, g(p) \in \CBF \}$.
\end{theorem}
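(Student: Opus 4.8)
The plan is to prove the equality $\CBF^\alpha = \{g \in \CBF \mid p^{1-\alpha}\, g(p) \in \CBF\}$ by establishing the two inclusions separately. I would lean on the characterization of complete Bernstein functions recalled in the appendix (namely, that $g \in \CBF$ iff $g$ admits an analytic extension to the cut plane $\mathbb{C} \setminus ]-\infty,0]$ with $\im g(p) \geq 0$ whenever $\im p > 0$, together with $g \geq 0$ on $]0,\infty[$) and on the composition rules of Theorem~\ref{thm:BFCM}.

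For the inclusion $\CBF^\alpha \subseteq \{g \in \CBF \mid p^{1-\alpha}\, g \in \CBF\}$, I would start with $g = h^\alpha$ for some $h \in \CBF$. First, $g \in \CBF$ because the power $x \mapsto x^\alpha$ is itself a complete Bernstein function for $0 < \alpha \leq 1$ and the composition of complete Bernstein functions is complete Bernstein (the $\CBF$ analogue of Theorem~\ref{thm:BFCM}). Then I would write $p^{1-\alpha}\, g(p) = p^{1-\alpha}\, h(p)^\alpha = \bigl(p^{(1-\alpha)/\alpha}\, h(p)\bigr)^\alpha$, and try to show the inner factor is in $\CBF$ so that taking the $\alpha$-power keeps it there. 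The cleaner route is probably the Stieltjes/half-plane argument: on the upper half-plane $h(p)$ has argument in $[0,\upi]$, so $h(p)^\alpha$ has argument in $[0,\alpha\upi]$, while $p^{1-\alpha}$ has argument in $[0,(1-\alpha)\upi]$; the product therefore has argument in $[0,\upi]$, which together with positivity on the positive axis gives $p^{1-\alpha}\, g(p) \in \CBF$.

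For the reverse inclusion, I would take $g \in \CBF$ with $p^{1-\alpha}\, g(p) \in \CBF$ and aim to exhibit $g$ as an $\alpha$-th power of a complete Bernstein function. The natural candidate is $h(p) := \bigl(p^{\alpha-1}\, g(p)\bigr)^{1/\alpha}$, equivalently $h = \bigl(p^{1-\alpha}\,g(p)\bigr)^{1/\alpha}\!\!\cdot p^{-(1-\alpha)/\alpha}$ rearranged so that $h^\alpha = g$. I would verify via the argument/imaginary-part test that $h$ maps the upper half-plane into itself: since $p^{1-\alpha}\,g(p) \in \CBF$ its argument lies in $[0,\upi]$, so the $1/\alpha$-power has argument in $[0,\upi/\alpha]$, and multiplying by $p^{-(1-\alpha)/\alpha}$ (argument in $[-(1-\alpha)\upi/\alpha,0]$) brings the total argument into $[0,\upi]$. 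Checking nonnegativity and the requisite analyticity and growth on the cut plane then certifies $h \in \CBF$, whence $g = h^\alpha \in \CBF^\alpha$.

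The main obstacle I expect is making the argument-counting rigorous at the boundary and branch points: the factorizations involve fractional powers $p^{(1-\alpha)/\alpha}$ and $p^{-(1-\alpha)/\alpha}$, and one must confirm that these, together with $g$, extend holomorphically to the whole cut plane without spurious singularities and that the imaginary-part inequality holds up to the boundary rays rather than merely in the open half-plane. Handling the degenerate cases where $g$ vanishes or blows up at $0$ or $\infty$, and confirming the representing measures behave, is the delicate part; since the statement is quoted as Proposition~7.11 of \citet{BernsteinFunctions}, I would ultimately defer these measure-theoretic details to that reference.
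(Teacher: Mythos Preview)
The paper does not supply a proof of Theorem~\ref{thm:auxCBF}; it simply quotes the result as Proposition~7.11 of \citet{BernsteinFunctions}. So there is nothing in the paper to compare against beyond the citation itself, and in that sense your plan to ultimately defer to the reference matches exactly what the author does.

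That said, your sketch has one sound half and one genuinely broken half. The forward inclusion is fine: if $g = h^\alpha$ with $h \in \CBF$, then $g \in \CBF$ by Theorem~\ref{thm:sqrCBF}, and $p^{1-\alpha} g(p) = p^{1-\alpha} h(p)^\alpha \in \CBF$ is exactly an instance of the log-convexity result Theorem~\ref{thm:logconvCBF} (with $f=h$, $g(p)=p$). The reverse inclusion, however, does not follow from the argument-counting you propose. You write that $\arg\bigl((p^{1-\alpha} g)^{1/\alpha}\bigr) \in [0,\upi/\alpha]$ and $\arg\bigl(p^{-(1-\alpha)/\alpha}\bigr) \in [-(1-\alpha)\upi/\alpha,0]$, and then assert that the product lands in $[0,\upi]$; but adding those two ranges gives $[-(1-\alpha)\upi/\alpha,\upi/\alpha]$, which strictly contains $[0,\upi]$ whenever $\alpha<1$. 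Even tracking a fixed $p$ with $\arg p = \theta$ does not save it: the product is precisely $g(p)^{1/\alpha}$, the two hypotheses together only give $0 \le \arg g(p) \le \upi - (1-\alpha)\theta$, and hence $\arg g(p)^{1/\alpha} \le (\upi - (1-\alpha)\theta)/\alpha$, which exceeds $\upi$ for every $\theta < \upi$. So the half-plane test alone, applied pointwise, cannot close the argument; the actual proof in \citet{BernsteinFunctions} uses more structure than this. Your instinct to defer the reverse inclusion to the reference is therefore the right call, but the heuristic you offer for it should be dropped rather than presented as ``details to be filled in''.
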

\begin{corollary} \label{cor:aux}
$\CBF \cap p^{1/2}\, \CBF = p^{1/2} \, \CBF^{1/2}$.
\end{corollary}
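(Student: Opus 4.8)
The plan is to prove Corollary~\ref{cor:aux}, namely that $\CBF \cap p^{1/2}\, \CBF = p^{1/2}\, \CBF^{1/2}$, by applying Theorem~\ref{thm:auxCBF} with the specific choice $\alpha = 1/2$ and then massaging the resulting equality into the stated form.

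First I would instantiate Theorem~\ref{thm:auxCBF} at $\alpha = 1/2$. This immediately yields
\begin{equation} \label{eq:corstep1}
\CBF^{1/2} = \{ g \in \CBF \mid p^{1/2}\, g(p) \in \CBF \}.
\end{equation}
The goal is to rewrite the right-hand side of the target identity, $p^{1/2}\, \CBF^{1/2}$, using this characterization. A function $h$ lies in $p^{1/2}\, \CBF^{1/2}$ exactly when $h(p) = p^{1/2}\, g(p)$ for some $g \in \CBF^{1/2}$. By \eqref{eq:corstep1} this $g$ satisfies both $g \in \CBF$ and $p^{1/2}\, g(p) \in \CBF$. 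The second condition says precisely that $h = p^{1/2}\, g \in \CBF$, and the first condition, rewritten as $g = p^{-1/2}\, h$, i.e. $h \in p^{1/2}\, \CBF$, gives membership in the other factor. So $h \in \CBF \cap p^{1/2}\, \CBF$.

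For the reverse inclusion I would start from an arbitrary $h \in \CBF \cap p^{1/2}\, \CBF$ and set $g(p) := p^{-1/2}\, h(p)$. The membership $h \in p^{1/2}\, \CBF$ means exactly that this $g$ belongs to $\CBF$, while $h \in \CBF$ together with $p^{1/2}\, g(p) = h(p)$ shows that $p^{1/2}\, g(p) \in \CBF$. By the characterization \eqref{eq:corstep1} these two facts together give $g \in \CBF^{1/2}$, and hence $h = p^{1/2}\, g \in p^{1/2}\, \CBF^{1/2}$. Combining the two inclusions establishes the equality.

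The argument is essentially a bookkeeping exercise once Theorem~\ref{thm:auxCBF} is in hand, so I do not anticipate a serious obstacle. The only point that warrants care is the bijective correspondence $g \leftrightarrow h = p^{1/2}\, g$: one must check that multiplication and division by $p^{1/2}$ are genuinely inverse operations on these function classes, so that the set $p^{1/2}\, \CBF^{1/2}$ is in natural correspondence with $\CBF^{1/2}$ itself, and that the factor $p^{1/2}$ occurring in Theorem~\ref{thm:auxCBF} (which is $p^{1-\alpha}$ with $\alpha = 1/2$) coincides with the prefactor in the statement. Once the exponents are matched, the equivalence is immediate.
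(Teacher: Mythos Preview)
Your argument is correct. Both inclusions follow cleanly from the biconditional in Theorem~\ref{thm:auxCBF} at $\alpha = 1/2$, and the bijection $g \leftrightarrow p^{1/2} g$ is unproblematic on $]0,\infty[$.

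There is a small methodological difference from the paper's proof worth noting. For the reverse inclusion $\CBF \cap p^{1/2}\,\CBF \subset p^{1/2}\,\CBF^{1/2}$ the paper does exactly what you do: write $f = p^{1/2} g$ with $g \in \CBF$ and invoke Theorem~\ref{thm:auxCBF}. For the forward inclusion $p^{1/2}\,\CBF^{1/2} \subset \CBF \cap p^{1/2}\,\CBF$, however, the paper does not use the ``only if'' direction of Theorem~\ref{thm:auxCBF}; instead it appeals to Theorem~\ref{thm:sqrCBF} (stability of $\CBF$ under powers in $[0,1]$) to get $f^{1/2} \in \CBF$, and to Theorem~\ref{thm:logconvCBF} (log-convexity of $\CBF$) applied to $p$ and $f$ to conclude $p^{1/2} f^{1/2} \in \CBF$. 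Your route is more economical --- a single invocation of Theorem~\ref{thm:auxCBF} handles both directions --- while the paper's route has the mild advantage of exhibiting the forward inclusion as a direct consequence of elementary closure properties of $\CBF$ rather than of the sharper characterization in Theorem~\ref{thm:auxCBF}.
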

\begin{proof}
Theorem~\ref{thm:logconvCBF} and Theorem~\ref{thm:sqrCBF} imply that $p^{1/2}\, \CBF^{1/2} \subset 
\CBF \cap p^{1/2}\, \CBF$.

If $f \in \CBF \cap p^{1/2}\, \CBF$ then $f(p) = p^{1/2}\, g(p)$ with $g \in \CBF$.
Theorem~\ref{thm:auxCBF} implies that $g \in \CBF^{1/2}$, hence $f \in p^{1/2}\, \CBF^{1/2}$,
q.e.d.
\end{proof} \hfill $\Box$

It follows from Corollary~\ref{cor:aux} that   
$\CBF \cap p^{1/2}\, \CBF = p^{1/2}\, \CBF^{1/2} \subset p^{1/2}\, \BF^{1/2}$, as expected. 

Returning to the case of a creep compliance in the CrF class, note that 
the wavenumber function has the form $\kappa(p) = p^{1/2} \, f(p)^{1/2}$, where $f$ has the 
integral representation 
\begin{equation}
f(p) = a + b \, p + \int_{]0,\infty[} \left[ 1 - \e^{-p r}\right] \,\nu(\dd r)
\end{equation}
with $a, b \geq 0$ and a non-negative Radon measure $\nu$ satisfying the inequalities \eqref{eq:twoineq}. 

Using Theorem~\ref{thm:lim} it is now easy to prove that 
$\kappa(p) = b^{1/2} \, p + R(p)$, where $R(p) = \oo[p]$ for $p \rightarrow \infty$. The phase function
in Green's functions has the form 
$$\exp(p\, t - \kappa(p) \vert x \vert) = \exp( p (t - b^{1/2}\, \vert x \vert) - R(p)\vert x
\vert)$$
with $p = -\ii \omega$, $\omega \in \mathbb{R}$. 
The arguments used in the proof of Theorem~7.2 in \citet{HanWM2013} can be used to demonstrate that the Green's function vanishes 
for $\vert x \vert > b^{-1/2}\, t$, hence $b^{-1/2}$ is the wavefront speed.

\section{Main theorem.}
\label{sec:proof}

Behavior of the attenuation and dispersion function for real frequencies (imaginary $p$) is of particular
interest. We shall therefore examine the behavior of the analytic continuation of a Bernstein function 
$f(p)$ on the imaginary axis. 

Analytic continuation of $f$ to the imaginary axis yields a function 
$F(\omega) := f(-\ii \omega)$, $\omega \in \mathbb{R}$. 
$f_\mathrm{R}(\omega) := \re F(-\ii \omega)$ and $f_\mathrm{I}(\omega) := \im F(-\ii \omega)$.
Hence
\begin{gather} \label{eq:au1}
f_\mathrm{R}(\omega) = a + \int_{]0,1]} [ 1 - \cos(r \omega)] \, \nu(\dd r) + 
\int_{]1,\infty[} [ 1 - \cos(r \omega)] \, \nu(\dd r)\\ \label{eq:au2}
f_\mathrm{I}(\omega) = \int_{]0,1]} \sin(r \omega) \, \nu(\dd r) + \int_{]1,\infty[} \sin(r \omega) \, \nu(\dd r)
\end{gather}

\begin{lemma} \label{lem:au}
The integrals in eqs.~(\ref{eq:au1}-\ref{eq:au2}) are bounded by linear functions of the variable $\omega$.
\end{lemma}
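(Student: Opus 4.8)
The plan is to estimate each of the four integrals appearing in \eqref{eq:au1} and \eqref{eq:au2} separately, splitting the analysis according to whether the integration variable $r$ lies in $]0,1]$ or in $]1,\infty[$, and to exploit the two finiteness conditions \eqref{eq:twoineq} accordingly. The key observation is that near $r=0$ the trigonometric integrands vanish linearly in $r$, which matches the condition $\int_{]0,1]} r\,\nu(\dd r) < \infty$, whereas for large $r$ they are merely bounded, which matches $\int_{]1,\infty[}\nu(\dd r) < \infty$.

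First I would record the elementary inequalities, valid for every real $x$: integrating $\sin$ from $0$ to $x$ gives $0 \le 1 - \cos x \le |x|$, and the standard estimate gives $|\sin x| \le |x|$; alongside the trivial bounds $1 - \cos x \le 2$ and $|\sin x| \le 1$. For the integrals over $]0,1]$ I would apply the linear bounds to the integrands, so that $|1 - \cos(r\omega)| \le r\,|\omega|$ and $|\sin(r\omega)| \le r\,|\omega|$. Integrating against $\nu$ then gives $\int_{]0,1]} |1 - \cos(r\omega)|\,\nu(\dd r) \le |\omega| \int_{]0,1]} r\,\nu(\dd r)$ and the same bound for the sine integral; the moment $\int_{]0,1]} r\,\nu(\dd r)$ is finite by \eqref{eq:twoineq}$_1$. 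Hence both contributions from $]0,1]$ are bounded by a constant multiple of $|\omega|$. For the integrals over $]1,\infty[$ the trivial bounds give $\int_{]1,\infty[} |1 - \cos(r\omega)|\,\nu(\dd r) \le 2\int_{]1,\infty[}\nu(\dd r)$ and $\int_{]1,\infty[}|\sin(r\omega)|\,\nu(\dd r) \le \int_{]1,\infty[}\nu(\dd r)$, both finite by \eqref{eq:twoineq}$_2$ and therefore constant in $\omega$.

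Collecting these four estimates, each of $f_\mathrm{R}$ and $f_\mathrm{I}$ is dominated by a function of the form $c_0 + c_1\,|\omega|$ with non-negative constants $c_0, c_1$ depending only on $a$ and the two finite moments of $\nu$, which is exactly the asserted linear bound. I do not anticipate a genuine obstacle here; the argument is a short chain of elementary estimates. The only point requiring care is pairing the right inequality with the right piece of the measure -- the linear-in-$r$ bound on $]0,1]$ so as to invoke the first moment condition, and the uniform bound on $]1,\infty[$ so as to invoke the second -- since applying a single crude bound uniformly on the whole half-line would fail to be integrable at one of the two ends.
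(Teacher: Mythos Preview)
Your proposal is correct and follows essentially the same approach as the paper: split each integral at $r=1$, use the linear bounds $|\sin x|\le |x|$ and $0\le 1-\cos x\le |x|$ on $]0,1]$ to invoke \eqref{eq:twoineq}$_1$, and the trivial bounds $|\sin|\le 1$, $|1-\cos|\le 2$ on $]1,\infty[$ to invoke \eqref{eq:twoineq}$_2$. Your justification of $1-\cos x\le |x|$ via integrating $\sin$ is in fact cleaner than the paper's derivative argument, but the structure and estimates are identical.
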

\begin{proof}  
Note that $\vert \sin(x)/x \vert \leq 1$ (the numerator and the denominator vanish at 0 and the absolute value of the derivative of the numerator does not exceed the value of the derivative of the numerator equal to 1). 
Hence $\vert \sin(r \omega)/r \vert \leq \vert \omega \vert$. In the same way one shows that $\vert (1 -  \cos(r \omega)/r \vert \leq \vert \omega \vert$. Hence   
$$\left\vert \int_{]0,1]} \sin(r \omega) \, \nu(\dd r) \right\vert = \left\vert \int_{]0,1]} \frac{\sin(r \omega)}{r} \, r \nu(\dd r) \right\vert \leq \vert \omega \vert \int_{]0,1]} r \, \nu(\dd r)$$
and 
$$\left\vert \int_{]0,1]} [ 1 - \cos(r \omega)]\, \nu(\dd r) \right\vert = \left\vert \int_{]0,1]} 
\frac{1 - \cos(r \omega)}{r} \, r \nu(\dd r) \right\vert \leq \vert \omega \vert \int_{]0,1]} r \nu(\dd r)$$
Furthermore
$$\left\vert \int_{]1,\infty[} \sin(r \omega) \, \nu(\dd r) \right\vert \leq \int_{]1,\infty[} \nu(\dd r) < \infty$$
$$\left\vert \int_{]1,\infty[} [ 1 - \cos(r \omega)] \, \nu(\dd r) \right\vert  \leq 2 \int_{]1,\infty[} \nu(\dd r) < \infty$$

Hence $\vert f_\mathrm{R}(\omega) \vert \leq a + C \vert \omega\vert$ and $\vert f_\mathrm{I}(\omega)\vert \leq b + \vert \leq  
D \vert \omega \vert$ for some positive reals $a, b, C, D$. 
\end{proof} 

\begin{theorem}
\begin{gather}
\vert \kappa_\mathrm{I}(\omega) \vert \leq K + L\,\vert\omega\vert
\\
\vert \kappa_\mathrm{R}(\omega) \vert \leq K + L \, \vert\omega\vert
\end{gather}
for some real constants $K$ and $L$.
\end{theorem}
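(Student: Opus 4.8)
The plan is to bound the modulus $|\kappa(-\ii\omega)|$ directly and then observe that both $\kappa_\mathrm{R}$ and $\kappa_\mathrm{I}$ are dominated by it, so that a single estimate yields both asserted inequalities. Since $\kappa(p) = p^{1/2}\, f(p)^{1/2}$ with $f$ the Bernstein function $p^2\,\tilde{J}(p)$, and since on the imaginary axis $p = -\ii\omega$ one has $|p^{1/2}| = |\omega|^{1/2}$ (the branch choice $\re\kappa \geq 0$ being irrelevant at the level of moduli), I would start from the factorization
$$|\kappa(-\ii\omega)| = |\omega|^{1/2}\, |F(\omega)|^{1/2}, \qquad F(\omega) := f(-\ii\omega).$$

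First I would control $|F(\omega)|$ using Lemma~\ref{lem:au}, which bounds the integral terms of $f_\mathrm{R}$ and $f_\mathrm{I}$ by linear functions of $|\omega|$; combined with the constant $a$ in \eqref{eq:au1} and the purely imaginary linear contribution $-b\,\omega$ coming from the $b\,p$ term of the spectral representation, this gives affine bounds $|f_\mathrm{R}(\omega)| \leq a + C|\omega|$ and $|f_\mathrm{I}(\omega)| \leq b + D|\omega|$. Hence
$$|F(\omega)| \leq |f_\mathrm{R}(\omega)| + |f_\mathrm{I}(\omega)| \leq (a+b) + (C+D)|\omega|,$$
a linear bound on the modulus of $F$.

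Next I would assemble the estimate for $\kappa$ by inserting this into the factorization, obtaining $|\kappa(-\ii\omega)| \leq |\omega|^{1/2}\bigl[(a+b) + (C+D)|\omega|\bigr]^{1/2}$. Applying the subadditivity of the square root, $\sqrt{x+y} \leq \sqrt{x} + \sqrt{y}$ for $x,y \geq 0$, splits the right-hand side into a term proportional to $|\omega|^{1/2}$ and a term proportional to $|\omega|$. The one place requiring genuine care is absorbing the lower-order $|\omega|^{1/2}$ piece into the linear form $K + L|\omega|$: for this I would use the elementary inequality $|\omega|^{1/2} \leq \tfrac12(1 + |\omega|)$, which is uniform in $\omega$ and handles both the large-$\omega$ and the near-origin regimes at once. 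Collecting constants then delivers $|\kappa(-\ii\omega)| \leq K + L|\omega|$ for suitable real $K, L > 0$.

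Finally, since $|\kappa_\mathrm{R}(\omega)| \leq |\kappa(-\ii\omega)|$ and $|\kappa_\mathrm{I}(\omega)| \leq |\kappa(-\ii\omega)|$, both inequalities of the theorem follow immediately with the same constants. The main obstacle is not any deep estimate but the bookkeeping of the square-root factor: the product $|\omega|^{1/2}\,|F(\omega)|^{1/2}$ is genuinely linear only asymptotically, and one must confirm that the residual $|\omega|^{1/2}$ contribution — governed by the constant part $a+b$ of $F$ — remains below a linear function near $\omega = 0$ as well, which the arithmetic--geometric-mean inequality guarantees uniformly.
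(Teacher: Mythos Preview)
Your proposal is correct and follows essentially the same route as the paper: bound $|F(\omega)|$ affinely via Lemma~\ref{lem:au}, use the factorization $|\kappa(-\ii\omega)| = |\omega|^{1/2}|F(\omega)|^{1/2}$, and then absorb the $|\omega|^{1/2}$ contribution into an affine bound. The only differences are cosmetic choices of elementary inequalities---you use $|F|\le|f_\mathrm{R}|+|f_\mathrm{I}|$ and then $\sqrt{x+y}\le\sqrt{x}+\sqrt{y}$ together with $|\omega|^{1/2}\le\tfrac12(1+|\omega|)$, whereas the paper bounds $\sqrt{f_\mathrm{R}^2+f_\mathrm{I}^2}$ directly and handles $(m|\omega|+M|\omega|^2)^{1/2}$ by completing the square---but the structure and substance are the same.
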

\begin{proof}
If $m = \max\{a, C\}$, $M = \max\{b, D\}$
then $$\vert F(\omega)\vert = \sqrt{f_\mathrm{R}(\omega)^2 + f_\mathrm{I}(\omega)^2} \leq \sqrt{2} \, \left( m + M \, \vert \omega\vert\right)$$
and therefore
$$
\vert \kappa(-\ii \omega) \vert \leq 2^{1/4} \, \left(
m \, \vert \omega\vert + M\, \vert \omega\vert^2 \right)^{1/2} \leq 2^{1/4} \,
\left(N + M^{1/2} \, \vert \omega \vert \right)$$ 
where $N := m/(2 M^{1/2})$. The thesis follows with $K = 2^{1/4}\, N$ and $L = 2^{1/4}\, M^{1/2}$. 
\end{proof}

It follows that the attenuation is majorized by a linear function of frequency. In the high frequency 
range it increases at most at a linear rate. This bound can perhaps be sharpened. 

\section{Concluding remarks.}

It has been shown that the assumption that the creep compliance is non-decreasing and concave implies 
that the attenuation function increases at a sublinear rate in the high-frequency range. 
This result is somewhat weaker than for materials with completely monotonic relaxation moduli and Bernstein class creep 
compliances \citep{SerHan2010,HanWM2013}. 

This result shows that some acoustic wave equations applied to ultrasonic 
tests of materials and especially for ultrasound in bio-tissues still popular in literature
\citep{Szabo1,Szabo2,KellyMcGoughMeerschaert08} are not compatible with viscoelasticity. Such equations are
based on the assumption of a power law attenuation with an exponent exceeding 1. We have thus shown that
such equations are incompatible with the assumption that the material is linear viscoelastic and the creep
compliance is non-decreasing and concave. 

The results obtained in this paper are compatible with experimental data. In the frequency range covered by
measurements the frequency dependence of attenuation is represented by a power law with an exponent $1 \leq
 \alpha \leq 2$ over less than three decades of frequency expressed in MHz. This fact is consistent with 
the results obtained in \citet{HanWM2013} for low-frequency attenuation in viscoelastic solids. 
It should however be borne in mind that the sound absorption measured in heterogeneous media includes loss due to 
scattering which is here not accounted for.

\appendix

\section{}
\label{app:A}

\begin{definition}
A real function $f$ on $[0,\infty[$ is said to be a complete Bernstein function if there
is a Bernstein function $g$ such that $f(x) = x^2\, \tilde{g}(x)$. 
\end{definition}

\begin{equation} \label{eq:xyc}
f(x) = a + b\, x + x \int_{]0,\infty[} \frac{\mu(\dd r)}{x + r}
\end{equation}

\begin{theorem}  \label{thm:CBFanalytic} \citep{Jacob01I} \\
Every CBF can be continued to an analytic function $F$ on $\mathbb{C}_-$ satisfying the
following conditions:
\begin{enumerate}[(i)]
\item $F$ assumes real values on $\mathbb{R_+}$;
\item a finite limit $\lim_{\substack{x\rightarrow 0\\x > 0}}\, F(x)$ exists and 
is non-negative;
\item $\sgn(\im[F(z)]) = \sgn(\im[z])$; \label{it:CBFanaliticiii}
\item
The function 
$F$  has an unique integral representation
$$F(z) = a + b \, z + \int_{]0,\infty[\;} \frac{t z - 1}{t + z} \rho(\dd t)$$
with $a, b \geq 0$ and a positive Radon measure $\rho$ satisfying the inequality
satisfying the inequality
$$\int_{]0,\infty[\;} \frac{\rho(\dd t)}{1 + t} < \infty$$
\label{it:CBFanaliticiv}
\end{enumerate}

If a complex valued function $F$ satisfies either the conditions (i)--(iii) or
condition (iv) then the restriction of $F$ to $\overline{\mathbb{R}_+}$ is a CBF.
\end{theorem}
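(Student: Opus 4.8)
The statement collects the equivalence between the class $\CBF$ of complete Bernstein functions and a subclass of Pick (Nevanlinna--Herglotz) functions. The plan is to establish the forward implications (i)--(iv) directly from the defining representation of a CBF, and then to obtain the two converse implications, the harder of which rests on the Herglotz representation theorem for functions mapping the upper half-plane into itself.

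For the forward direction I start from the definition $f(x)=x^2\,\tilde g(x)$ with $g\in\BF$. Laplace-transforming the Bernstein representation of $g$ (Theorem~\ref{th:BernsteinSpectral}) and multiplying by $x^2$ produces exactly \eqref{eq:xyc},
$$f(x)=a+b\,x+x\int_{]0,\infty[}\frac{\mu(\dd r)}{x+r},\qquad \int_{]0,\infty[}\frac{\mu(\dd r)}{1+r}<\infty,$$
with $a=f(0^+)$. Replacing the real variable $x$ by $z$ in the cut plane $\mathbb{C}_-=\mathbb{C}\setminus\,]-\infty,0]$, the integrand $z/(z+r)$ is holomorphic in $z$ for every $r>0$, and the bound $\bigl|z/(z+r)\bigr|\le C_z/(1+r)$, valid locally uniformly on compact subsets of $\mathbb{C}_-$, lets me differentiate under the integral sign (equivalently, apply Morera's theorem). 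Hence $F(z):=a+b\,z+z\int_{]0,\infty[}(z+r)^{-1}\,\mu(\dd r)$ is analytic on $\mathbb{C}_-$, and since every term is real for $z=x>0$ condition (i) holds.

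Next I would verify (ii) and (iii) by direct estimates. For $0<x\le 1$ one has $x/(x+r)\le 1/(1+r)$, which is $\mu$-integrable, so dominated convergence gives $x\int(x+r)^{-1}\,\mu(\dd r)\to0$ as $x\to0^+$; thus the limit in (ii) exists and equals $a\ge0$. For (iii) I write $z=\xi+\ii\eta$ and compute
$$\im\frac{z}{z+r}=\frac{r\,\eta}{|z+r|^2},$$
so that $\im F(z)=\eta\bigl(b+\int_{]0,\infty[} r\,|z+r|^{-2}\,\mu(\dd r)\bigr)$; the parenthesized factor is non-negative and finite, because $r\,|z+r|^{-2}\le C_z/(1+r)$, whence $\sgn\im F(z)=\sgn\eta=\sgn\im z$. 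Representation (iv) is then obtained from \eqref{eq:xyc} by the change of variable that recasts the kernel $z/(z+r)$ into the symmetric kernel $(t\,z-1)/(t+z)$ and absorbs the resulting constants into $a$ and $b$; uniqueness of $\rho$ follows from a Stieltjes inversion formula recovering $\rho$ from the boundary values of $\im F$ on the negative real axis.

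For the converse, assume first (iv). Rewriting the kernel back into the form of \eqref{eq:xyc} exhibits the restriction of $F$ to $\overline{\mathbb{R}_+}$ as $a+b\,x+x\int(x+r)^{-1}\,\mu(\dd r)$ with $\int(1+r)^{-1}\,\mu(\dd r)<\infty$; inverting the Laplace transform identifies a Bernstein function $g$ with $F(x)=x^2\,\tilde g(x)$, so $F|_{\overline{\mathbb{R}_+}}\in\CBF$. The implication from (i)--(iii) is where the real work lies, and I expect it to be the main obstacle: conditions (i) and (iii) say that $F$ is a Pick function on $\mathbb{C}_-$ that is real on $\mathbb{R}_+$, and one must invoke the Herglotz representation theorem for such functions to produce a representing measure, then use reality on $\mathbb{R}_+$ to force that measure to be supported on $]-\infty,0]$ and condition (ii) to pin down its integrability against $(1+t)^{-1}$. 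After reflecting the measure to $]0,\infty[$ this yields precisely representation (iv), reducing the claim to the case already treated. The analytic input of the Herglotz theorem, together with the boundary-regularity argument needed to locate the support of the measure and to justify finiteness of the limit in (ii), is the delicate step; the forward estimates are routine by comparison.
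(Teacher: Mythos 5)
First, a point of reference: the paper does not prove this theorem at all --- it is quoted from Jacob's book \citep{Jacob01I} --- so there is no internal proof to compare against, and your proposal must be judged on its own soundness. Your forward direction is essentially correct: passing from $f(x)=x^2\,\tilde g(x)$ to the representation \eqref{eq:xyc}, the locally uniform bound $\vert z/(z+r)\vert\le C_K/(1+r)$ on compact subsets of the cut plane (giving analyticity), the dominated-convergence argument for (ii), and the identity $\im\bigl[z/(z+r)\bigr]=r\,\eta/\vert z+r\vert^2$ for (iii) are all sound, modulo the degenerate case of constant CBFs, for which $\im F\equiv 0$ and the literal sign equality in (iii) fails --- a defect of the statement rather than of your argument.

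The gaps are in the two converses. The implication (i)--(iii) $\Rightarrow$ CBF is the real substance of the theorem, and you defer it entirely: naming the Herglotz representation, the support-localization via reality on $\mathbb{R}_+$, and the use of (ii) is the correct plan, but none of it is executed, so this part is a sketch, not a proof. More seriously, the converse from (iv), which you dismiss as routine, does not go through as described. Undoing your own change of variables via the kernel identity $\frac{z}{z+t}=\frac{1}{1+t^2}+\frac{t}{1+t^2}\cdot\frac{tz-1}{t+z}$ forces $\mu(\dd t)=\frac{1+t^2}{t}\,\rho(\dd t)$, whence $\int\frac{\mu(\dd t)}{1+t}=\int\frac{1+t^2}{t(1+t)}\,\rho(\dd t)$; the printed hypothesis $\int\frac{\rho(\dd t)}{1+t}<\infty$ does not control this integral near $t=0$, so you cannot recover a representation of the form \eqref{eq:xyc}. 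Indeed the implication ``(iv) $\Rightarrow$ CBF'' is false with the integrability condition as printed: take $a=b=0$ and $\rho=\delta_1$, so that $F(z)=(z-1)/(z+1)$; this satisfies (iv) verbatim, yet $F$ is negative on $[0,1[$ and $F(0^+)=-1$, so its restriction to $\overline{\mathbb{R}_+}$ is not a complete Bernstein function. A measure genuinely arising from a CBF automatically satisfies the stronger conditions $\rho(]0,\infty[)<\infty$ and $\int_{]0,\infty[}\rho(\dd t)/t<\infty$, together with $a\ge\int_{]0,\infty[}\rho(\dd t)/t$, and any complete proof must either work with these corrected conditions or explicitly flag the misquotation (the duplicated phrase ``satisfying the inequality'' in the paper already hints at a transcription error). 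By treating this direction as a formality, your proposal misses both the gap in the argument and the error in the statement that the gap conceals.
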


This theorem has the following corollary:
\begin{theorem} \label{thm:sqrCBF}
If $f$ is a CBF and $0 \leq \alpha \leq 1$ then $f(x)^\alpha$ is also a CBF.
\end{theorem}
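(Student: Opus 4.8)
The plan is to work with the analytic characterisation of complete Bernstein functions furnished by Theorem~\ref{thm:CBFanalytic}, rather than manipulating the integral representation directly. The cases $\alpha = 0$ and $\alpha = 1$ are trivial (they give the constant $1$ and $f$ itself, both CBFs), and a constant $f$ is a CBF, so I may assume $0 < \alpha < 1$ and $f$ nonconstant. Let $F$ be the analytic continuation of $f$ to $\mathbb{C}_-$ provided by Theorem~\ref{thm:CBFanalytic}, so that $F$ is real on $\mathbb{R}_+$, has a finite nonnegative limit at $0$, and satisfies the sign condition $\sgn(\im F(z)) = \sgn(\im z)$. I would define $G(z) := F(z)^\alpha$ via the principal branch of the power and verify that $G$ again satisfies conditions (i)--(iii); the converse half of Theorem~\ref{thm:CBFanalytic} then identifies the restriction $G|_{\overline{\mathbb{R}_+}} = f^\alpha$ as a CBF.

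First I would check that $G$ is well defined and analytic on $\mathbb{C}_-$, which amounts to showing that $F$ omits the cut $]-\infty,0]$. Off the real axis this is immediate from the sign condition, since there $\im F(z) \neq 0$ forces $F(z) \notin \mathbb{R}$. On $]0,\infty[$ the function $F$ is nonnegative and nondecreasing (a CBF being in particular a Bernstein function); if $F(x_0) = 0$ held for some $x_0 > 0$, monotonicity would give $F \equiv 0$ on $]0,x_0]$ and the identity theorem on the connected domain $\mathbb{C}_-$ would force $F \equiv 0$, contrary to nonconstancy. Hence $F > 0$ on $]0,\infty[$, so $F(\mathbb{C}_-) \subseteq \mathbb{C}_-$, $\arg F(z) \in \,]-\pi,\pi[$, and $G = F^\alpha$ is analytic on $\mathbb{C}_-$ with $\arg G(z) = \alpha \arg F(z)$.

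Conditions (i) and (ii) for $G$ are then routine: $F(x) > 0$ for $x > 0$ gives $G(x) = F(x)^\alpha > 0$ real, and the finite nonnegative limit $L = \lim_{x \to 0+} F(x)$ yields the finite nonnegative limit $L^\alpha$ for $G$. The heart of the matter is condition (iii), and I expect this to be the step that carries the whole proof. For $z$ in the open upper half-plane the sign condition gives $\arg F(z) \in \,]0,\pi[$, whence $\arg G(z) = \alpha\, \arg F(z) \in \,]0,\alpha\pi[ \,\subseteq\, ]0,\pi[$ precisely because $\alpha \leq 1$; therefore $\im G(z) > 0$. The lower half-plane is handled identically (or by Schwarz reflection). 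This angular contraction by the factor $\alpha$, which keeps the image inside the correct half-plane, is the one and only place where the hypothesis $\alpha \leq 1$ is used. With (i)--(iii) verified, the converse direction of Theorem~\ref{thm:CBFanalytic} completes the proof.
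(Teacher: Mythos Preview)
Your proposal is correct and follows exactly the route the paper intends: the paper states Theorem~\ref{thm:sqrCBF} only as an immediate corollary of Theorem~\ref{thm:CBFanalytic}, without writing out a proof, and you have supplied precisely the missing details --- in particular the angular-contraction step $\arg F(z)^\alpha = \alpha\,\arg F(z) \in\,]0,\pi[$ that verifies condition~(iii), which is the same mechanism the paper uses explicitly in the proof of Theorem~\ref{thm:logconvCBF}. Your extra care in excluding zeros of $F$ on $]0,\infty[$ (so that the principal branch of $F^\alpha$ is genuinely analytic) is a detail the paper omits but which is indeed needed.
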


\begin{theorem} \label{thm:logconvCBF}
If $f, g \in \CBF$ and $0 \leq \alpha \leq 1$ then the pointwise product 
$h := f^\alpha \, g^{1-\alpha} \in \CBF$. 
\end{theorem}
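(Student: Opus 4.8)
The plan is to rely on the analytic characterization of complete Bernstein functions in Theorem~\ref{thm:CBFanalytic}: a function is a CBF precisely when it extends to an analytic function on the cut plane $\mathbb{C}_- = \mathbb{C}\setminus(-\infty,0]$ satisfying conditions (i)--(iii), i.e.\ it is real and non-negative on $\mathbb{R}_+$, has a finite non-negative limit at the origin, and satisfies the Pick condition $\sgn(\im F(z)) = \sgn(\im z)$ (read as mapping each open half-plane into the closed half-plane with the same sign of the imaginary part). The conceptual heart of the proof is that the complementary exponents $\alpha$ and $1-\alpha$ turn the argument of the product into a \emph{convex combination} of the arguments of the two factors, which is exactly what keeps it inside the half-plane the Pick condition demands; this is why the product of two \emph{arbitrary} CBFs fails but this particular product succeeds.

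First I would continue $f$ and $g$ to analytic functions $F,G$ on $\mathbb{C}_-$ via Theorem~\ref{thm:CBFanalytic}. Setting aside the trivial cases $F\equiv 0$ or $G\equiv 0$ (which give $H\equiv 0$, a CBF), a non-constant Pick function maps the open upper half-plane strictly into itself, and a non-zero CBF is strictly positive on $(0,\infty)$; hence each of $F,G$ maps $\mathbb{C}_-$ into $\mathbb{C}_-$ and never hits the cut $(-\infty,0]$. Consequently the principal branches $F^\alpha$ and $G^{1-\alpha}$ are single-valued and analytic on $\mathbb{C}_-$, and I define $H := F^\alpha\,G^{1-\alpha}$, analytic there.

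Next I would dispatch conditions (i) and (ii). On $\mathbb{R}_+$ the factors $F,G$ are real and non-negative, so $F^\alpha, G^{1-\alpha}$ are real and non-negative and $H\geq 0$, giving (i). Since $F,G$ have finite non-negative limits $F(0^+),G(0^+)$ at the origin, $H$ has the finite non-negative limit $F(0^+)^\alpha\,G(0^+)^{1-\alpha}$, giving (ii).

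The main step is condition (iii). For $\im z > 0$ the Pick property of $F,G$ forces the principal arguments $\arg F(z),\arg G(z)$ to lie in $[0,\pi)$ (finite since the values stay in the slit plane). Writing $H = |F|^\alpha |G|^{1-\alpha}\,\exp\!\big(\ii[\alpha\arg F + (1-\alpha)\arg G]\big)$, the argument of $H$ equals the convex combination $\alpha\arg F(z) + (1-\alpha)\arg G(z)$, which again lies in $[0,\pi)$, so $\im H(z)\geq 0$. The reflection $F(\bar z)=\overline{F(z)}$ coming from reality on $\mathbb{R}_+$ yields the reversed inequality for $\im z<0$, whence $\sgn(\im H(z))=\sgn(\im z)$. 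Having verified (i)--(iii), the converse half of Theorem~\ref{thm:CBFanalytic} identifies the restriction of $H$ to $\overline{\mathbb{R}_+}$ as a CBF, which is $h$. I expect the angle identity itself to be immediate; the point requiring genuine care is the rigorous justification that $F,G$ avoid the cut so that the principal powers are globally single-valued and analytic — the same technical ingredient already underlying Theorem~\ref{thm:sqrCBF}, which is the special case $g\equiv 1$ of the present statement.
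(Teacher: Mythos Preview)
Your approach is essentially the same as the paper's: both use the analytic characterization of Theorem~\ref{thm:CBFanalytic}, observe that $\arg H(z)=\alpha\arg F(z)+(1-\alpha)\arg G(z)\in[0,\upi]$ on the upper half-plane, and invoke the converse direction to conclude $h\in\CBF$. Your version is in fact more careful than the paper's sketch, explicitly treating the trivial cases, the well-definedness of the principal powers, and the separate verification of conditions (i)--(iii).
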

\begin{proof}
The functions $f, g$ map $\mathbb{R}_+$ into $\overline{\mathbb{R}_+}$,
hence $h$ has the same property.

The functions $f, g$ are holomorphic in $\mathbb{C}_\pm$ and 
map $\mathbb{C}_\pm$ into $\overline{\mathbb{C}_\pm}$,
hence $h(z)$ is holomorphic in $\mathbb{C}+$ and $0 \leq \arg h(z) 
\leq \alpha \, \upi + (1-\alpha)\, \upi = \upi$. By Theorem~\ref{thm:CBFanalytic}
$h \in \CBF$.
\end{proof}

\end{document}